\newtheorem{theorem}{Theorem}
\newtheorem{lemma}[theorem]{Lemma}
\theoremstyle{definition}
\DeclareMathOperator{\bs}{bs}
\DeclareMathOperator{\sens}{s}
\DeclareMathOperator{\fbs}{fbs}
\DeclareMathOperator{\orf}{OR}
\DeclareMathOperator{\Cert}{C}
\DeclareMathOperator{\fC}{FC}
\DeclareMathOperator{\RC}{RC}
\DeclareMathOperator{\Q}{Q}
\DeclareMathOperator{\D}{D}
\DeclareMathOperator{\R}{R}
\DeclareMathOperator{\M}{M}
\begin{document}

\title{On Block Sensitivity and Fractional Block Sensitivity \thanks{
This work was supported by the European Union Seventh Framework Programme (FP7/2007-2013) under the QALGO (Grant Agreement No. 600700) project and the RAQUEL (Grant Agreement No. 323970) project, the ERC Advanced Grant MQC, and the Latvian State Research Programme NeXIT project No. 1.}}
\author{Andris Ambainis
 \and Krišjānis Prūsis
 \and Jevgēnijs Vihrovs}
\date{\small Centre for Quantum Computer Science, Faculty of Computing,\\ University of Latvia, Raiņa bulv.~19, Rīga, LV-1586, Latvia}

\maketitle

\begin{abstract}
We investigate the relation between the block sensitivity $\bs(f)$ and fractional block sensitivity $\fbs(f)$ complexity measures of Boolean functions.
While it is known that $\fbs(f) = O(\bs(f)^2)$, the best known separation achieves $\fbs(f) = \left(\nicefrac{1}{3\sqrt2} +o(1)\right) \bs(f)^{3/2}$.
We improve the constant factor and show a family of functions that give $\fbs(f) = \left(\nicefrac{1}{\sqrt6}-o(1)\right) \bs(f)^{3/2}.$
\end{abstract}

\section{Introduction}

The query complexity of Boolean functions is one of the simplest models of computation.
In this setting, the cost of the computation is the number of the input bits one needs to query to decide the value of the function on this input.
One of the main challenges is to precisely relate the computational power of the \emph{decision tree complexity} $\D(f)$, \emph{randomized decision tree complexity} $\R(f)$ and \emph{quantum decision tree complexity} $\Q(f)$ (see \cite{Aaronson_2016} for the currently known relations between various complexity measures).

\emph{Block sensitivity} $\bs(f)$ is a useful intermediate measure that has been used to show polynomial relations between the above measures.
\emph{Fractional block sensitivity} $\fbs(f)$ (aka fractional certificate complexity $\fC(f)$, randomized certificate complexity $\RC(f)$ \cite{Aaronson_2008}) is a recently introduced measure that is a relaxation of block sensitivity \cite{Tal_2013}.
It has been used to show a tight relation (up to logarithmic factors) between the \emph{zero-error randomized decision tree complexity} $\R_0(f)$ and \emph{two-sided bounded error randomized decision tree complexity} $\R_2(f)$ \cite{Kulkarni_2016}.

The relation between $\bs(f)$ and $\fbs(f)$ has been only partially understood.
On one hand, $\bs(f) \leq \fbs(f)$ and this inequality is tight.
On the other hand, it is known that $\fbs(f) \leq \bs(f)^2$ but the best known separation gives $\fbs(f) = \left(\nicefrac{1}{3\sqrt2}+o(1)\right) \bs(f)^{3/2}$ \cite{Gilmer_2016}.
We show a family of functions that give a constant factor improvement, $\fbs(f) = \left(\nicefrac{1}{\sqrt6}-o(1)\right) \bs(f)^{3/2}$.

\section{Definitions}

Let $f: \{0,1\}^n \rightarrow \{0,1\}$ be a Boolean function on $n$ variables.
We denote the input to $f$ by a binary string $x = (x_1, \ldots, x_n)$, so that the $i$-th variable is $x_i$.
For an index set $P \subseteq [n]$, let $x^P$ be the input obtained from an input $x$ by flipping every bit $x_i$, $i \in P$.

We briefly define the notions of sensitivity, certificate complexity and variations on them.
For more information on them and their relations to other
complexity measures (such as deterministic, probabilistic and quantum decision
tree complexities), we refer the reader to the surveys by Buhrman and de Wolf \cite{Buhrman_deWolf_2002}
and Hatami et al. \cite{Hatami_Kulkarni_Pankratov_2011}.

The \emph{sensitivity complexity} $\sens(f,x)$ of $f$ on an input $x$ is defined as \begin{equation} \sens(f,x) = | \{ i \in [n] \mid f(x) \neq f(x^{\{i\}})\} |.\end{equation} The \emph{sensitivity} $\sens(f)$ of $f$  is defined as $\max_{x \in \{0,1\}^n} \sens(f,x)$.

The \emph{block sensitivity} $\bs(f,x)$ of $f$ on an input $x$ is defined as the maximum number $t$ such that there are $t$ pairwise disjoint subsets $B_1, \ldots , B_t$ of $[n]$ for which $f(x) \neq f\left(x^{B_i}\right)$. We call each $B_i$ a \emph{block}.
The \emph{block sensitivity} $\bs(f)$ of $f$  is defined as $\max_{x \in \{0,1\}^n} \bs(f,x)$.

The \emph{fractional block sensitivity} $\fbs(f,x)$ of $f$ on an input $x$ is the optimal value of the following linear program, where each sensitive block of $x$ is assigned a real valued weight $w_B$:
\begin{align*}
\max \sum_{f(x) \neq f(x^B)} w_B \hspace{1.5cm} \text{subject to: } &\forall i \in [n]: \sum_{B \ni i} w_B \leq 1, \\
&\forall B: 0 \leq w_B \leq 1.
\end{align*}
The \emph{fractional block sensitivity} of $f$ is defined as $\fbs(f) = \max_{x \in \{0,1\}^n} \fbs(f,x)$.

A \emph{certificate} $C$ of $f$ is a partial assignment $C: P \rightarrow \{0,1\}, P \subseteq [n]$ of the input such that $f$ is constant on this restriction. We call $|P|$ the \emph{length} of $C$. If $f$ is always 0 on this restriction, the certificate is a \emph{0-certificate}. If $f$ is always 1, the certificate is a \emph{1-certificate}.

The \emph{certificate complexity} $\Cert(f,x)$ of $f$ on an input $x$ is defined as the minimum length of a certificate that $x$ satisfies.
The \emph{certificate complexity} $\Cert(f)$ of $f$  is defined as $\max_{x \in \{0,1\}^n} \Cert(f,x)$.

The \emph{fractional certificate complexity} $\fC(f,x)$ of $f$ on an input $x$ is the optimal value of the following linear program, where each position $i \in [n]$ is assigned a real valued weight $v_i$:
\begin{align*}
\min \sum_{i \in [n]} v_i \hspace{1.5cm} \text{subject to: } &\forall B \text{ s.t. } f(x) \neq f(x^B): \sum_{i \in B} v_i \geq 1, \\
&\forall i \in [n]: 0 \leq v_i \leq 1.
\end{align*}
The \emph{fractional certificate complexity} of $f$ is defined as $\fC(f) = \max_{x \in \{0,1\}^n} \fC(f,x)$.

For any of these measures $\M \in \{\sens, \bs, \fbs, \fC, \Cert\}$, define $\M_b(f) = \max_{x \in f^{-1}(b)} \M(f,x)$.
In that way, we define the measures $\sens_0(f)$, $\sens_1(f)$, $\bs_0(f)$, $\bs_1(f)$, $\fbs_0(f)$, $\fbs_1(f)$, $\fC_0(f)$, $\fC_1(f)$, $\Cert_0(f)$, $\Cert_1(f)$.
In particular, $\M(f) = \max\{\M_0(f), \M_1(f)\}$.

One can show that $\sens(f) \leq \bs(f) \leq \fbs(f) \leq \fC(f) \leq \Cert(f)$ \cite{Tal_2013}.
In fact, the linear programs of $\fbs(f,x)$ and $\fC(f,x)$ are duals of each other.
Therefore, $\fbs(f) = \fC(f)$.

\section{Separation}

The separation in \cite{Gilmer_2016} composes a graph property Boolean function (namely, whether a given graph is a star graph) with the $\orf$ function.
We build on these ideas and define a new graph property $g$ for the composition that gives a larger separation.

\begin{theorem}
There exists a family of Boolean functions such that $$\fbs(f) = \left(\frac{1}{\sqrt6}-o(1)\right) \bs(f)^{3/2}.$$
\end{theorem}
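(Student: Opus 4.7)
The plan is to follow the template of Gilmer et al.\ and build $f = g \circ \orf_m$ as the composition of a carefully designed graph property $g$ on the $\binom{N}{2}$ edges of $K_N$ with $\orf_m$ applied coordinatewise to each edge variable. The parameters $N$ and $m$ will be tuned at the end to push the constant up to $1/\sqrt6$.

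The creative step is to choose $g$ together with a distinguished input $G^*$ (empty or otherwise) whose minimal sensitive blocks form a hypergraph with a strictly larger ratio of fractional to integral matching number than the $N$ stars at the empty graph that Gilmer uses. Once $g$ and $G^*$ are in hand, the lower bound on $\fbs(f)$ is essentially immediate: assuming $G^*=\vec0$ for cleanness, each LP-optimal weight $w_B$ of $\fbs(g,G^*)$ can be spread uniformly over the $m^{|B|}$ ways to lift $B$ through the OR gates, producing a feasible solution of the $\fbs(f)$ LP of total value $m \cdot \fbs(g,G^*)$, so that $\fbs(f) \geq m \cdot \fbs(g,G^*)$.

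The technical heart is the matching upper bound on $\bs(f)$, and crucially this must hold at \emph{every} input. Composition with $\orf_m$ introduces a ``multiplexing'' phenomenon: two pairwise disjoint sensitive blocks of $f$ can touch the same underlying edge of $g$ by using different OR coordinates. I would fix an input $x$, let $G$ be the induced graph at the $g$-layer, and bound $\bs(f,x)$ by combining (i) a combinatorial bound on how many pairwise disjoint underlying-edge patterns can be sensitive at $G$, dictated by the structure of $g$'s sensitive blocks at $G$, with (ii) a bound of $m$ on how many disjoint $f$-blocks can share any single underlying edge. Taking the maximum of the resulting bound over all $G$ and choosing $m$ as a suitable power of $N$ makes the two contributions balance; a short two-variable optimization then gives $\bs(f)^{3/2}/\fbs(f) \to \sqrt6$.

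The main obstacle is part (i) of this upper bound: the property $g$ has to be engineered so that \emph{no} input $G$ (the analogue of the star input that inflates $\bs$ in Gilmer's construction) admits more pairwise disjoint sensitive blocks than the balanced target allows. If even one input class exceeds the balanced bound, the separation collapses, and pinning down exactly which $g$ avoids this uniformly while maintaining the gain at $G^*$ is what forces the specific graph property chosen. Solving this two-sided constraint is what ultimately produces the improved constant $1/\sqrt6$ rather than the weaker $1/(3\sqrt2)$.
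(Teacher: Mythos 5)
Your proposal is a plan rather than a proof, and the two pieces it leaves open are precisely where all the content of the theorem lies. First, you never specify the graph property $g$. The entire improvement from $1/(3\sqrt2)$ to $1/\sqrt6$ comes from one concrete choice: $g(x)=1$ iff the graph is a star centered at some vertex $i$ \emph{and}, with $[N]$ partitioned into the three residue classes mod $3$, no edge joins two non-center vertices in the same class. This keeps $\fbs_0(g)\ge N/2$ at the empty graph (the $N$ star blocks pairwise intersect in exactly one edge, so weight $\nicefrac12$ each is feasible), keeps $\bs_0(g)\le 3$ (four disjoint $0$-blocks would force two certificates $C_i,C_j$ with $i\equiv j \pmod 3$, which both set $x_{i,j}=1$ while every other certificate sets it to $0$, so two disjoint blocks would have to flip the same bit), and shrinks the $1$-certificates from about $N^2/2$ to $N^2/6+N/6$, which is what lowers $\bs_1(g)$ and buys the extra factor of $\sqrt3$. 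Saying ``choose $g$ with a better fractional-to-integral matching ratio'' does not identify this object, and the balancing computation $\fbs(f)/\bs(f)^{3/2}\gtrsim \fbs_0(g)/\bigl(\bs_0(g)\sqrt{\bs_1(g)}\bigr)$ that actually produces $1/\sqrt6$ never appears.

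Second, your composition goes the wrong way around. You propose $g\circ\orf_m$, replacing each edge variable by an OR of $m$ fresh bits, which is exactly why you face the ``multiplexing'' problem of disjoint $f$-blocks reusing the same underlying edge through different OR coordinates. You correctly flag this as the technical heart and then do not resolve it, and it is genuinely dangerous: already at the empty graph with $m\ge2$, the $N$ star blocks, which pairwise share one edge, lift to $N$ pairwise disjoint blocks of $f$, so $\bs_0$ inflates along with $\fbs_0$ and the gain at $G^*$ can evaporate. The paper instead uses the \emph{outer} composition $f=\orf(g^{(1)},\ldots,g^{(m)})$, an OR of $m$ copies of $g$ on disjoint variable sets, for which Proposition 31 of Gilmer et al.\ gives exactly $\bs_1(f)=\bs_1(g)$, $\bs_0(f)=m\cdot\bs_0(g)$ and $\fbs_0(f)=m\cdot\fbs_0(g)$, with no multiplexing and no need to analyze $\bs(f)$ at arbitrary inputs; one then sets $m=\bs_1(g)/\bs_0(g)$ and is done. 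So the uniform two-sided bound you describe as the main obstacle is not needed on the paper's route, and on your route it is both unproved and unlikely to come out as cleanly as you hope.
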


\begin{proof}

Let $N \geq 12$ be a multiple of 3.
An input on $\binom{N}{2}$ variables $(x_{1,2}, x_{1,3},\ldots,x_{N-1,N})$ encodes a graph $G$ on $N$ vertices.
Let $x_{i,j} = 1$ iff the vertices $i$ and $j$ are connected by an edge in $G$.

We define an auxiliary function $g : \{0,1\}^{\binom{N}{2}} \rightarrow \{0,1\}$.
Partition $[N]$ into three sets $S_0,S_1,S_2$ such that $S_r = \{i \in [N] \mid i \equiv r \pmod 3 \}$.
Let $g(x)=1$ iff:
\begin{itemize}
\item there is some vertex $i$ that is connected to every other vertex by an edge (a star graph);
\item for any $r \in \{0, 1, 2\}$, no two vertices $j, k \neq i$ such that $j, k \in S_r$ are connected by an edge.
\end{itemize}
Formally, $g(x)=1$ iff $x$ satisfies one of the following 1-certificates $C_1,\ldots,C_N$:
$C_i$ assigns 1 to every edge in $\{x_{j,k} \mid j = i \lor k = i \} $, and assigns 0 to every edge in $\{x_{jk} \mid j \neq i, k \neq i, j \equiv k \pmod 3 \}$.

Now we calculate the values of $\bs_0(g), \bs_1(g), \fbs_0(g)$.
\begin{itemize}
\item $\bs_0(g) = 3$.

Consider an input $x$ describing a triangle graph between vertices $i, j, k$.
For this input $g(x) = 0$.
Let $x'$ be an input obtained from $x$ by removing the edge $x_{i,j}$ and adding all the missing edges $x_{k,l}$, for all $l \neq i, j$.
The corresponding graph is a star graph, therefore, $g(x') = 1$.
Let $B_k$ be the sensitive block that flips $x$ to $x'$.
Similarly define $B_i$ and $B_j$.
None of the three blocks overlap, hence $\bs_0(g,x) \geq 3$.

Now we prove that $\bs_0(g) \leq 3$.
Assume the contrary, that there exists an input $x \in f^{-1}(0)$ with $\bs(g,x) \geq 4$.
Then $x$ has (at least) 4 non-overlapping sensitive blocks $B_1,\ldots, B_4$.
Each $x^{B_i}$ satisfies one of the 1-certificates, each a different one.
There are 4 such certificates, therefore at least two of them require a star at vertices $i, j$ belonging to the same $S_r$.
The corresponding certificates $C_i$ and $C_j$ both assign 1 at the edge $x_{i,j}$.
On the other hand, every other $C_k$ assigns 0 at $x_{i,j}$.
Therefore, of the 4 certificates corresponding to $B_1, \ldots, B_4$, two assign 1 to this edge and two assign 0 to this edge.
Then, regardless of the value of $x_{i,j}$, we would need to flip it in two of the blocks $B_1,\ldots, B_4$: a contradiction, since the blocks don't overlap.
Therefore no such $x$ exists.

\item $\bs_1(g) =\frac{ N^2}{6}+\frac{N}{6}$.

Examine any 1-certificate $C_i$.
Find three indices $j,k,l \equiv i \pmod 3$ (this is possible, as $N \geq 12$).
Any input $x$ that satisfies $C_i$ has $x_{i,j} = x_{i,k} = x_{i,l} = 1$.
On the other hand, any other 1-certificate $C_t$ requires at least two of the variables $x_{i,j}, x_{i,k}, x_{i,l}$ to be 0.
Hence, the Hamming distance between $C_i$ and $C_t$ is at least two.
Therefore, flipping any position of $x$ that is fixed in $C_i$ changes the value of the function as well.
Thus, $\sens(f,x) = \Cert(f,x)$.
As $\sens(f,x) \leq \bs(f,x) \leq \Cert(f,x)$, we have $$\bs(f,x) = \Cert(f,x) = |C_i| = 3{\binom{N/3}{2}} + \frac{2N}{3} = \frac{ N^2}{6}+\frac{N}{6}.$$

\item $\fbs_0(g) \geq \frac N 2$.

Examine the all zeros input $0^{\binom{N}{2}}$.
Any sensitive block $B$ of this input flips the edges on a star from some vertex.
Therefore, any position is flipped by exactly two of the sensitive blocks.
The weights $w_B = \frac 1 2$ for each sensitive block $B$ then give a feasible solution for the fractional block sensitivity linear program.
As there are $N$ sensitive blocks, $\fbs(g,0^{\binom{N}{2}}) = \frac N 2$.

\end{itemize}

To obtain the final function we use the following lemma:
\begin{lemma}[Proposition 31 in \cite{Gilmer_2016}]
Let $g$ be a non-constant Boolean function and $$f = \orf(g^{(1)}, \ldots ,g^{(m)}),$$ an $\orf$ composed with $m$ copies of $g$.
Then for complexity measures $\M \in \{\bs, \fbs\}$, we have
\begin{align*}
\M_1(f) &= \M_1(g) \\
\M_0(f) &= m \cdot \M_0(g).
\end{align*}
\end{lemma}
Let $m=\bs_1(g)/\bs_0(g)=\frac{N^2}{18}+\frac{N}{18}$.
Then $\bs(f) = \bs_0(f)=\bs_1(f) = \bs_1(g) = \frac{N^2}{6}+\frac{N}{6}$.
On the other hand, $\fbs(f) \geq \fbs_0(f)=m\cdot \fbs_0(g) \geq m \cdot \frac N 2 = \frac{N^3}{36}+\frac{N^2}{36}$.
Therefore, we have $$\fbs(f) \geq \left( \frac{N^2}{6}+\frac{N}{6} \right) \cdot \frac{N}{6} = \bs(f) \cdot \left(\frac{1}{\sqrt 6}-o(1)\right) \sqrt{\bs(f)} = \left(\frac{1}{\sqrt 6}-o(1)\right) \bs(f)^{3/2}.$$
\end{proof}

\bibliographystyle{alpha}
\bibliography{bibliography}

\begin{thebibliography}{ABDK16}

\bibitem[Aar08]{Aaronson_2008}
Scott Aaronson.
\newblock Quantum certificate complexity.
\newblock {\em Journal of Computer and System Sciences}, 74(3):313--322, 2008.

\bibitem[ABDK16]{Aaronson_2016}
Scott Aaronson, Shalev Ben-David, and Robin Kothari.
\newblock Separations in query complexity using cheat sheets.
\newblock In {\em Proceedings of the Forty-eighth Annual ACM Symposium on
  Theory of Computing}, STOC '16, pages 863--876, New York, NY, USA, 2016. ACM.

\bibitem[BdW02]{Buhrman_deWolf_2002}
Harry Buhrman and Ronald de~Wolf.
\newblock Complexity measures and decision tree complexity: a survey.
\newblock {\em Theoretical Computer Science}, 288(1):21--43, 2002.

\bibitem[GSS16]{Gilmer_2016}
Justin Gilmer, Michael Saks, and Srikanth Srinivasan.
\newblock Composition limits and separating examples for some {B}oolean
  function complexity measures.
\newblock {\em Combinatorica}, 36(3):265--311, 2016.

\bibitem[HKP11]{Hatami_Kulkarni_Pankratov_2011}
Pooya Hatami, Raghav Kulkarni, and Denis Pankratov.
\newblock {\em Variations on the Sensitivity Conjecture}.
\newblock Number~4 in Graduate Surveys. Theory of Computing Library, 2011.

\bibitem[KT16]{Kulkarni_2016}
Raghav Kulkarni and Avishay Tal.
\newblock On fractional block sensitivity.
\newblock {\em Chicago Journal of Theoretical Computer Science}, 8:1--16, 2016.

\bibitem[Tal13]{Tal_2013}
Avishay Tal.
\newblock Properties and applications of {B}oolean function composition.
\newblock In {\em Proceedings of the 4th Conference on Innovations in
  Theoretical Computer Science}, ITCS '13, pages 441--454, New York, NY, USA,
  2013. ACM.

\end{thebibliography}

\end{document}